\documentclass[aps,superscriptaddress,twocolumn,showpacs,pra]{revtex4-2}

\usepackage{amsmath}
\usepackage{latexsym}
\usepackage{amssymb}
\usepackage{graphicx}
\usepackage[colorlinks=true, citecolor=blue, urlcolor=blue]{hyperref}
\usepackage{float}
\usepackage{amsfonts}
\usepackage{textcomp}
\usepackage{mathpazo}
\usepackage{comment}
\usepackage{xr}



\usepackage{bbm}

\usepackage{xcolor}
\definecolor{myurlcolor}{rgb}{1,.5,0}
\definecolor{mycitecolor}{rgb}{0,0,1}
\definecolor{myrefcolor}{rgb}{1,.5,0}
\usepackage{hyperref}
\hypersetup{colorlinks,
linkcolor=myrefcolor,
citecolor=mycitecolor,
urlcolor=myurlcolor}

\sloppy

\usepackage[draft]{fixme}
\usepackage{amsmath,bbm}
\usepackage{graphicx}
\usepackage{amsfonts}
\usepackage{amssymb}
\usepackage{amsmath, amssymb, amsthm,verbatim,graphicx,bbm}
\usepackage{mathrsfs}
\usepackage{color,xcolor,longtable}

\usepackage{xr}
\makeatletter
\newcommand*{\addFileDependency}[1]{
  \typeout{(#1)}
  \@addtofilelist{#1}
  \IfFileExists{#1}{}{\typeout{No file #1.}}
}
\makeatother




\newcommand{\beq}[0]{\begin{equation}}
\newcommand{\eeq}[0]{\end{equation}}

\newcommand{\one}{\leavevmode\hbox{\small1\normalsize\kern-.33em1}}

\def\be{\begin{equation}}
\def\ee{\end{equation}}
\def\ben{\begin{eqnarray}}
\def\een{\end{eqnarray}}
\def\eea{\end{array}}
\def\bea{\begin{array}}

\newcommand{\Tr}[1]{\mathrm{Tr}#1}
\newcommand{\bei}{\begin{itemize}}
\newcommand{\eei}{\end{itemize}}
\newcommand{\ket}[1]{|#1\rangle}
\newcommand{\bra}[1]{\langle#1|}

\newcommand{\proj}[1]{\ket{#1}\!\!\bra{#1}}

\newcommand{\I}{\mathbbm{1}}

\renewcommand{\emph}[1]{\t\mathrm{ext}bf{#1}}


\makeatletter
\newtheorem*{rep@theorem}{\rep@title}
\newcommand{\newreptheorem}[2]{%
\newenvironment{rep#1}[1]{%
 \def\rep@title{#2 \ref{##1}}%
 \begin{rep@theorem}}%
 {\end{rep@theorem}}}
\makeatother

\theoremstyle{plain}
\newtheorem{thm}{Theorem}
\newtheorem*{thm*}{Theorem}
\newreptheorem{thm}{Theorem}

\newtheorem{cor}[thm]{Corollary}

\theoremstyle{definition}

\theoremstyle{remark}


\usepackage[T1]{fontenc}

\begin{document}

\title{Detecting quantum resources in a semi-device independent framework }
\author{Shubhayan Sarkar}
\email{shubhayan.sarkar@ulb.be}
\affiliation{Laboratoire d’Information Quantique, Université libre de Bruxelles (ULB), Av. F. D. Roosevelt 50, 1050 Bruxelles, Belgium}
\author{Chandan Datta}
\email{cdatta@iitj.ac.in}
\affiliation{Institute for Theoretical Physics III, Heinrich Heine University D\"{u}sseldorf, Universit\"{a}tsstra{\ss}e 1, D-40225 Düsseldorf, Germany}
\affiliation{Department of Physics, Indian Institute of Technology Jodhpur, Jodhpur 342030, India}

\begin{abstract}	
We investigate whether one can detect the presence of a quantum resource in some operational task or equivalently whether every quantum resource provides an advantage over its free counterpart in some black box scenarios where one does not have much information about the devices. For any dimension $d$, we find that for any resource theory with less than $d^2$ number of linearly independent free states or free operations, there exist correlations that can detect the presence of a quantum resource. For this purpose, we introduce the framework for detecting quantum resources semi-device independently by considering the prepare-and-measure scenario with the restriction on the dimension of the quantum channel connecting the preparation box with the measurement box. We then explicitly construct witnesses to observe the presence of various quantum resources. We expect these results will open avenues for detecting and finding uses of quantum resources in general operational tasks.

\end{abstract}

\maketitle

{\em Introduction--}
 With the emergence of quantum technologies in the last few years, it has become clear that entanglement is an indispensable resource for a range of intriguing technological applications that cannot be matched by classical means \cite{Entanglement_review}. The applications of entanglement are practically endless \cite{Entanglement_review}, spanning from 
 quantum teleportation \cite{Teleportation}, to the secure distribution of cryptographic keys among several parties \cite{QKD}. However, it has become clear in recent years that entanglement is not the only resource for quantum technological applications, and there are other applications that rely on other features of quantum systems, such as coherence \cite{Baumgratz_coherence,Coherence_review}, imaginarity \cite{Hickey_2018,imaginarity_PRL,imaginarity_PRA,Varun_2023}, asymmetry \cite{asymmetry}, contextuality \cite{contextuality,Nature_contextuality,Review_contextuality}, purity \cite{purity_horodecki,purity_gour,purity_streltsov}, and so on.

To investigate the roles of these resources in a wide range of information-processing tasks, we must first understand their characteristics. Quantum resource theories \cite{HORODECKI_2012, Review_QRT} proposed in quantum information science provide a unified approach for studying all such resources, as well as their roles and limitations in various quantum technological applications. While the exploration of entanglement characterisation began in the early 1990s \cite{Teleportation,LOCC_1,LOCC_2,LOCC_3,Entanglement_review}, it was not until 2008 when Gour and Spekkens introduced the concept of resource theory and established a comprehensive framework in their investigation of the resource theory of asymmetry \cite{asymmetry}. Soon after, a number of additional resource theories—including coherence \cite{Baumgratz_coherence,Coherence_review}, imaginarity \cite{Hickey_2018,imaginarity_PRL,imaginarity_PRA,Varun_2023}, purity \cite{purity_horodecki,purity_gour,purity_streltsov}, stabiliser quantum computation \cite{Veitch_2014}, quantum measurements \cite{Oszmaniec_2019}, and numerous others—were introduced in the literature \cite{Review_QRT}. As a consequence, a generic framework for quantum resource theories has been established \cite{Coecke_2016,Fritz_2017,Review_QRT}.

After identifying all of these resources, it is critical to observe their actual presence in practical scenarios. One of the most common methods for observing this is state tomography. However, in general, this method is not economical as it involves reconstructing the complete information about the given quantum state. Moreover, one also has to trust their measurement device. Similarly, in the subchannel discrimination task, it was shown in \cite{CD1} that any resource state belonging to a convex resource theory provides an advantage over the free states. This result was then generalised to dynamical resource theories \cite{CD4}, generalised probabilistic theories \cite{CD3}, and non-convex resource theories \cite{CD2}. However, in all these works, one must trust the subchannels along with the quantum state. 

Although the above works show that quantum resources show an advantage over their free counterpart, the element of trust in these scenarios makes the problem of detecting resources not applicable in their formalism. For instance, if one already knows the quantum state then there is no point in detecting the resource inside the device.
In this work, we pose a stronger question. Can quantum states or operations be detected to be resourceful without trusting them or any other operation involved in the test? To put it simply, we enquire whether one can operationally detect the resource of unknown quantum states and operations.
For instance, violation of Bell inequalities serves as a witness for entanglement detection \cite{Nonlocality_review} where one does not trust any state or measurements. Similarly, assuming multiple independent sources, it was shown in \cite{Renou_2021,sarkar2025} that one can detect whether any of the quantum states or measurements need to be imaginary, thus providing a way to detect imaginarity without trusting any state or measurements.


Here, we focus on one of the well-known semi-device independent scenarios known as the prepare-and-measure (PM), where one assumes an upper bound on the dimension of the quantum channel connecting the preparation and measurement \cite{Gallego_2010}. In a practical scenario, it can be imagined as a communication task involving two separate boxes, a preparation box responsible for creating quantum states, which are then transmitted to a measurement box. 
A fascinating fact of this scenario is that we can perceive those boxes as black boxes, so removing the need for trust in them. The prepare-and-measure scenario has been employed in the literature for a wide range of applications, including random access codes \cite{RAC1,RAC2}, informational principles \cite{Pawlowski_2009,Chaves_2015}, key distribution \cite{QKD1,QKD2}, dimension witness \cite{Gallego_2010,Brunner_2013}, self-testing \cite{selftesting_review}, and more.

As we are concerned here with detecting resourceful states and operations, we treat operations as generalised instruments that can generate quantum or classical outputs. 
Restricting the quantum channel to transmit $d-$dimensional states, we then show that for any resource theory with strictly less than $d^2$ number of linearly independent free states or free operations, one can always detect whether the preparation box produces at least one resource state and the operation box performs at least one resourceful operation. Then considering particular resource theories, we construct explicit witnesses to detect their presence inside the preparation as well as the measurement box.


{\em Quantum resource theories--} 
Any quantum resource theory is built on two basic components: a set of free states and a set of free operations \cite{Review_QRT}. The quantum states that do not belong to the set of free states are referred to as resource states. Any resource theory must adhere to the fundamental constraint that the set of free operations must not be able to generate resource states from the set of free states. Hence, quantum resource theories allow us to study quantum information processing tasks under a restricted set of operations. As an example, in the resource theory of entanglement, separable states denote free states, local operations and classical communications (LOCC) denote free operations, and information processing tasks are studied under LOCC \cite{Entanglement_review}. 

Let us now restrict to states and operations acting on $\mathbb{C}^d$ for any $d\geq2$, where $d$ is the dimension of the Hilbert space.
The free states are denoted as $\sigma_d$ and the collection of these states is denoted by the set $\mathcal{S}_d$. Now, consider a set of linearly independent density matrices $\proj{\Gamma_{d,i}}\in\mathcal{S}_d$ using which any state $\sigma_d$ can be written via their linear combination. Equivalently, the elements of the set $ \mathcal{S}_{\mathrm{ext},d}=\{\proj{\Gamma_{d,i}}\}_i$ forms a basis of $\mathcal{S}_d$. However, it is important to note here that $\mathcal{S}_d$ might not contain all the density matrices which can be expressed as linear combinations of $\proj{\Gamma_{d,i}}$. Moreover, it is also important to realise that not every linear combination of these states would result in a valid density matrix. Furthermore, the set  $ \mathcal{S}_{\mathrm{ext},d}$ might not be unique. Let us now denote the cardinality of $\mathcal{S}_{\mathrm{ext},d}$ as $n(\mathcal{S}_{\mathrm{ext},d})$.
For instance, in the resource theory of coherence \cite{Coherence_review}, the set $\mathcal{S}_{\mathrm{ext},d}$ is given by $\mathcal{S}_{\mathrm{ext},d}=\{\proj{0},\proj{1},\ldots,\proj{d-1}\}$ and thus $n(\mathcal{S}_{\mathrm{ext},d})=d$. Similarly, for the resource theory of athermality or asymmetry \cite{athermality_brandao,asymmetry}, we have that $n(\mathcal{S}_{\mathrm{ext},d})\leq(d-1)^2+1$. Likewise, for the resource theory of magic or imaginarity in dimension two \cite{Veitch_2014,imaginarity_PRL}, we have that $n(\mathcal{S}_{\mathrm{ext},2})=3$. For any dimension $d$, the maximum cardinality of this set in quantum theory is $n_{\mathrm{max}}(\mathcal{S}_{\mathrm{ext},d})=d^2$ \cite{D_Ariano_2005}.

Similarly, one can also characterize free operations $\delta_d$ based on the above procedure. Equivalently, we express the operation $\delta_d$ using its Kraus decomposition as $\delta_d\equiv \{K_{j,\delta_d}^{\dagger}K_{j,\delta_d}\}$ which might not be unique. The operator $K_{j,\delta_d}^{\dagger}K_{j,\delta_d}$ corresponds to the positive operator-valued measure (POVM) element of the operation $\delta_d$. Let us denote the set of these free operations as $\mathcal{O}_d$ which is composed of all possible POVM elements of every free operation acting on $\mathbb{C}^d$. Let us now consider a set of linearly independent positive operators $\proj{\Delta_{d,i}}\in\mathcal{O}_d$ such that they form a basis of $\mathcal{O}_d$. Let us denote the set of these linearly independent operators as $\mathcal{O}_{\mathrm{ext},d}=\{\proj{\Delta_{d,i}}\}_i$. The cardinality of $\mathcal{O}_{\mathrm{ext},d}$ is denoted as $n(\mathcal{O}_{\mathrm{ext},d})$. Any operator in $\mathcal{O}_d$ can be expressed as a linear combination of elements in $\mathcal{O}_{\mathrm{ext},d}$. For instance, in coherence resource theory $n(\mathcal{O}_{\mathrm{ext},d})=d$ \cite{Coherence_review}, for the resource theory of asymmetry $n(\mathcal{O}_{\mathrm{ext},d})\leq (d-1)^2+1$ \cite{asymmetry}, and for imaginarity in dimension two \cite{imaginarity_PRL}, $n(\mathcal{O}_{\mathrm{ext},2})=3$. Again, in quantum theory for any dimension $d: n_{\mathrm{max}}(\mathcal{O}_{\mathrm{ext},d})=d^2$. Any resource theory defined on a $d-$dimensional Hilbert space, from now on will be denoted as $\{\mathcal{S}_d,\mathcal{O}_d\}$.

\begin{figure}
    \centering
    \includegraphics[width=\linewidth]{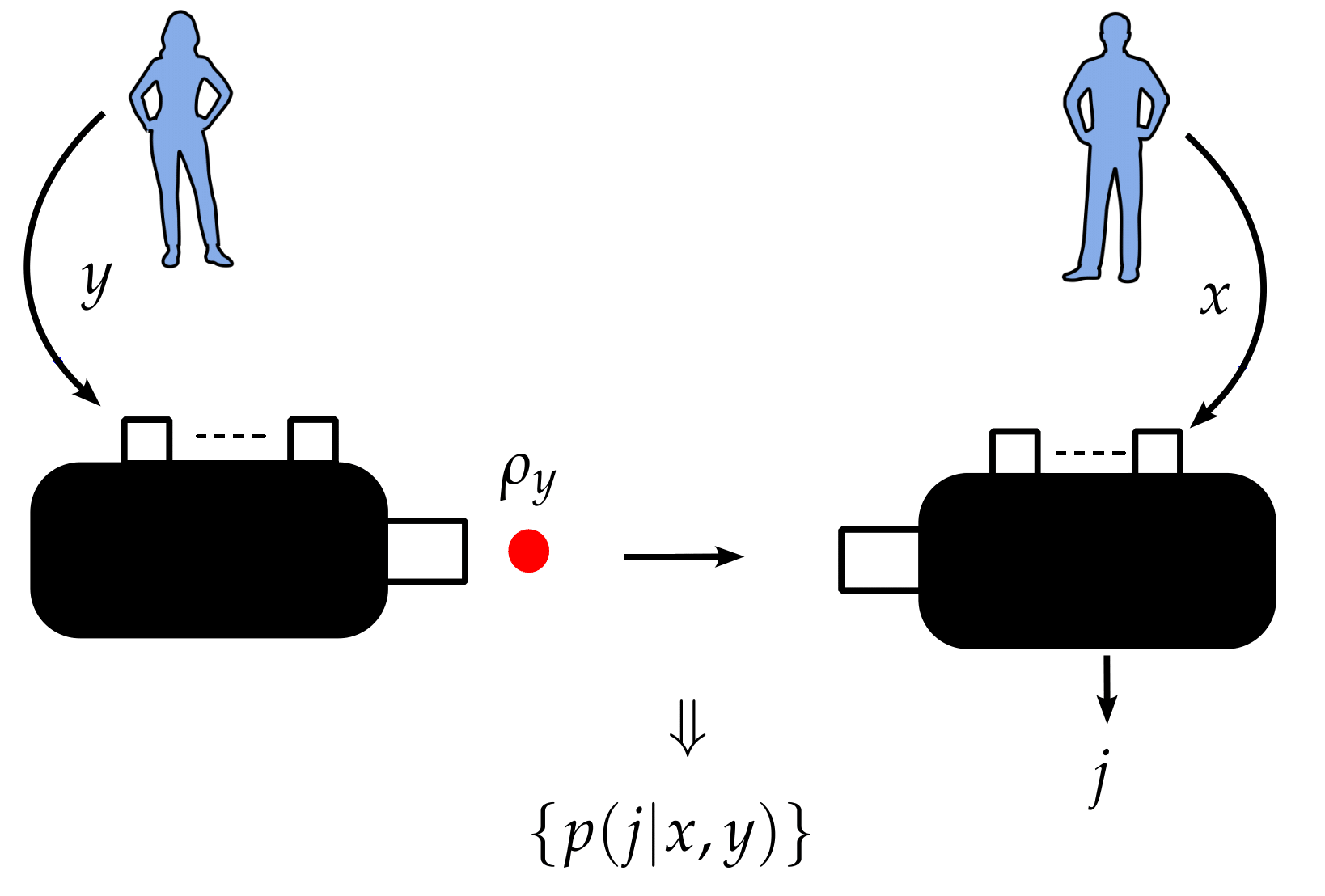}
    \caption{ We consider two parties, a producer and an observer, each situated in spatially separated labs. They can communicate solely via a quantum channel supporting a quantum system of dimension $d$. The producer freely selects $y$ inputs to prepare quantum states $\rho_y$, which are then sent to the observer. The observer, in turn, selects $x$ inputs to apply quantum operations on the received states, yielding outcomes $j$. After all the runs of the experiment, one obtains the probability distribution $\{p(j|x,y)\}$.}
    \label{fig1}
\end{figure}

{\em Prepare-and-measure scenario--} 
We consider here a prepare-and-measure (PM) scenario with two parties, producer and observer in two different spatially separated labs who can communicate only via a quantum channel that can support a quantum system bounded by the dimension $d$. The producer can freely choose $y$ inputs based on which it prepares quantum states $\rho_y$ and sends it to the observer. The observer now freely chooses $x$ inputs based on which quantum operations (generalised quantum instruments) are applied on the incoming quantum state to obtain the outcomes $j$. This is illustrated in Fig. \ref{fig1}. 

From the above setup, one obtains the probability distribution $\Vec{p}=\{p(j|x,y)\}$ where $p(j|x,y)$ denotes the probability of obtaining outcome $j$ by the operation box given the inputs $x,y$. In quantum theory, the inputs $y$ generate a qudit state $\rho_y$ acts on $\mathbb{C}^d$ and the operation box does the operation denoted by the POVM elements as $\mathbb{O}_x=\{K_{j,x}^{\dagger}K_{j,x}\}$ such that they act on $\mathbb{C}^d$. Now, the probabilities $p(j|x,y)$ are given by
\begin{equation}
    p(j|x,y)=\Tr(K_{j,x}^{\dagger}K_{j,x}\rho_y).
\end{equation}
Notice that in the above description, we treat operations in the same way as measurements in quantum theory. In general, it may not be possible to obtain these probabilities in a practical setting. However, we show below that treating operations similar to measurement allows us to detect the presence of a quantum resource in a semi-device-independent way. To detect resources in an experiment, one can restrict to measurements and all below analysis follows straightaway.
Furthermore, the correlations $\vec{p}$ in general are not convex due to the dimension constraint. Consequently, finding general results for semi-device independent schemes is highly non-trivial. 

{\em Results--} Using the above framework, let us now establish two no-go theorems that apply to a large class of quantum resource theories. The first one concerns the case when the set of linearly independent free states is strictly less than $d^2$.

\begin{thm}\label{theo1} Consider a resource theory $\{\mathcal{S}_d,\mathcal{O}_d\}$ such that the cardinality of the $\mathcal{S}_{\mathrm{ext},d}$ is $N$ with $N<d^2$. Consider now the PM scenario described above with $y=0,1,\ldots,N$ and $x=0,1,\ldots,d^2-1$ and any $j\geq2$ such that one obtains $\vec{p}=\{p(j|x,y)\}$. Then, there exists at least one probability distribution $\vec{p}_{Q}$ in quantum theory that can not be realized by free states in $\mathcal{S}_d$. 
\end{thm}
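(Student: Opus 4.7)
The plan is a linear-algebraic rank argument. Arrange the observed behavior into a matrix $P$ with entries $P_{(j,x),y}=p(j|x,y)=\mathrm{Tr}(E_{j,x}\rho_y)$, where $E_{j,x}=K_{j,x}^{\dagger}K_{j,x}$ denotes the $j$-th POVM effect of the operation labelled $x$. Viewing Hermitian operators on $\mathbb{C}^d$ as vectors in a real space of dimension $d^2$ via the Hilbert--Schmidt inner product, this matrix factorizes as $P=M^{T}S$, where the columns of $S$ are the vectorizations of the prepared states $\rho_y$ and the columns of $M$ are the vectorizations of the effects $E_{j,x}$. Consequently $\mathrm{rank}(P)\le\min\{\mathrm{rank}(S),\mathrm{rank}(M)\}$.

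First I would exploit this bound on the free side. Every free state lies in the $N$-dimensional linear span of $\mathcal{S}_{\mathrm{ext},d}$, so any free choice $\rho_0,\ldots,\rho_N\in\mathcal{S}_d$ yields a matrix $S$ whose columns lie in a common $N$-dimensional subspace, giving $\mathrm{rank}(S)\le N$. Hence $\mathrm{rank}(P)\le N$ for every free strategy, independently of which operations the observer picks. Equivalently, every free behavior satisfies the universal constraint that all $(N+1)\times(N+1)$ minors of $P$ vanish.

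Next I would construct a quantum behavior violating this constraint. Since density matrices on $\mathbb{C}^d$ span the entire $d^2$-dimensional space of Hermitian operators and the assumption $N<d^2$ gives $N+1\le d^2$, one can pick $N+1$ linearly independent density matrices $\rho_0^Q,\ldots,\rho_N^Q$. For the operations, an informationally complete collection of $d^2$ linearly independent POVM effects exists in every dimension; with $d^2$ operations each having at least $j\ge 2$ outcomes, there are at least $2d^2$ available effects, which is more than enough to realize such a collection and force $\mathrm{rank}(M)=d^2$. Then $\mathrm{rank}(P_{Q})=\min\{N+1,d^2\}=N+1>N$, so $\vec{p}_{Q}$ cannot be reproduced by any free strategy. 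The only delicate point is the explicit realization of a tomographically complete set of operations within the allotted labels, but this reduces to a standard existence result for informationally complete POVMs, so no substantive obstacle is expected.
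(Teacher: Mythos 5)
Your proposal is correct and follows essentially the same route as the paper: both arguments reduce the question to the rank of the behavior matrix, bounding it by $N$ for free preparations (since free states lie in the $N$-dimensional span of $\mathcal{S}_{\mathrm{ext},d}$) and exhibiting a quantum realization of rank $N+1$ by pairing $N+1$ linearly independent states with a tomographically complete set of effects. Your factorization $P=M^{T}S$ is just a more compact packaging of the paper's explicit computation that the rows of $M_Q$ are linearly independent while those of $M_F$ are not.
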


The proof of the above theorem can be found in Appendix A of \cite{SupMat}. Let us extend the above theorem for resource theories with the number of linearly independent free operations strictly less than $d^2$.

\begin{thm}\label{theo2} Consider a resource theory $\{\mathcal{S}_d,\mathcal{O}_d\}$ such that the cardinality of the $\mathcal{O}_{\mathrm{ext},d}$ is $N$ such that $N<d^2$. Consider now the PM scenario described above with $y=0,1,\ldots,d^2-1$ and $x=0,1,\ldots,N$ and any $j\geq2$ such that one obtains $\vec{p}=\{p(j|x,y)\}$. Then, there exists at least one probability distribution $\vec{p}_{Q}$ in quantum theory that can not be realized by free operations in $\mathcal{O}_d$.
\end{thm}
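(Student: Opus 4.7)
\textbf{Proof sketch for Theorem \ref{theo2}.} The plan is to mirror the proof of Theorem \ref{theo1}, exploiting the symmetry of $p(j|x,y) = \Tr(M_{j,x}\rho_y)$ under exchanging preparations $\rho_y$ and POVM elements $M_{j,x}$. In Theorem \ref{theo1} the restricted side is the preparation box ($N < d^2$ linearly independent free states), while the $d^2$ measurement settings furnish tomographically complete probes. Here I would swap the roles: the restricted side is the operation box, and the $d^2$ preparation settings furnish tomographic probes in state space.

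First I would fix a tomographically complete family $\{\rho_y\}_{y=0}^{d^2-1}$, i.e.\ $d^2$ quantum states forming a basis of the real vector space of Hermitian operators on $\mathbb{C}^d$; such a set exists since $n_{\max}(\mathcal{S}_{\mathrm{ext},d}) = d^2$ in quantum theory. This induces a linear bijection $\Phi : M \mapsto \bigl(\Tr(M\rho_y)\bigr)_{y=0}^{d^2-1}$ from the Hermitian operator space onto $\mathbb{R}^{d^2}$. Let $W := \mathrm{span}\{\Delta_{d,i}\}_{i=1}^N$, the $N$-dimensional subspace that, by hypothesis on $\mathcal{O}_{\mathrm{ext},d}$, contains every POVM element of every free operation, and let $V := \Phi(W)$. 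Consequently, whenever a correlation $\vec{p}$ is reproducible by free operations, each vector $\bigl(p(j|x,y)\bigr)_{y=0}^{d^2-1}$ must lie in the $N$-dimensional subspace $V \subset \mathbb{R}^{d^2}$.

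Next I would exhibit a quantum strategy whose correlation vector escapes $V$ for some $(j,x)$. Since $N < d^2$, $W$ is a proper subspace of the Hermitian operator space, and the set of admissible POVM elements $\{M : 0 \leq M \leq \mathbb{I}\}$ has full affine dimension $d^2$ (it contains an open neighbourhood of $\tfrac{1}{2}\mathbb{I}$), so it cannot be contained in $W$. Pick $M^* \in [0,\mathbb{I}]$ with $M^* \notin W$, assign the two-outcome POVM $\{M^*, \mathbb{I}-M^*\}$ to input $x = 0$, and assign arbitrary valid quantum operations to $x = 1, \dots, N$; call the resulting distribution $\vec{p}_Q$. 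Suppose for contradiction that some free strategy with POVM elements $\{M_{j,x}^F\} \subset W$ reproduced $\vec{p}_Q$. Matching the $(j,x) = (0,0)$ rows gives $\Phi(M_{0,0}^F) = \Phi(M^*)$, and injectivity of $\Phi$ forces $M^* = M_{0,0}^F \in W$, contradicting the choice of $M^*$.

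The main obstacle is the dimension-counting input: one must justify both that a tomographically complete family of valid density matrices exists and that $\{M : 0 \leq M \leq \mathbb{I}\}$ is not contained in any proper linear subspace of Hermitian operators. Both follow from the full-dimensionality of the quantum state space, but they are the crux of why $N < d^2$ is the sharp threshold. Once these are granted, the argument is a direct dualisation of the proof of Theorem \ref{theo1}, with the choice $y \in \{0,\dots,d^2-1\}$ and $x \in \{0,\dots,N\}$ and the two-outcome structure ensuring $j \geq 2$.
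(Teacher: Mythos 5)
There is a genuine gap in your argument, and it is exactly at the point you mark with ``Consequently''. You define $\Phi$ using the \emph{target} preparations $\{\rho_y\}$ and then claim that any correlation reproducible by free operations has rows lying in $V=\Phi(W)$, and later that matching the $(0,0)$ row forces $\Phi(M^F_{0,0})=\Phi(M^*)$. But in this semi-device-independent setting the hypothetical free reproduction of $\vec{p}_Q$ is \emph{not} required to use the same states: the theorem only constrains the operations, and the paper's proof explicitly places no restriction on the states $\tilde{\rho}_y$ used by the free strategy. The matching condition is therefore $\mathrm{Tr}(M^F_{0,0}\tilde{\rho}_y)=\mathrm{Tr}(M^*\rho_y)$ for all $y$, i.e.\ $\tilde{\Phi}(M^F_{0,0})=\Phi(M^*)$ for a \emph{different} map $\tilde{\Phi}$, from which injectivity of $\Phi$ gives you nothing. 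Worse, a single row can generically be reproduced: if the free theory contains any two-outcome POVM whose element has eigenvalues spanning $[0,1]$, one can choose $\tilde{\rho}_y$ separately for each $y$ to hit any prescribed value $\mathrm{Tr}(M^*\rho_y)\in[0,1]$. So your contradiction, which uses only the $x=0$ setting, cannot work; the obstruction must come from the requirement that \emph{one} family $\{\tilde{\rho}_y\}$ simultaneously reproduce all $N+1$ settings.

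The fix — and this is what the paper does — is to replace row-by-row subspace membership with a rank argument on the full $(N+1)\times d^2$ matrix $[p(0|x,y)]_{x,y}$. If every free POVM element $\Pi_x$ lies in the $N$-dimensional span of $\{\Delta_{d,i}\}$, then the matrix $[\mathrm{Tr}(\Pi_x\tilde{\rho}_y)]$ factors through that span and has rank at most $N$ \emph{whatever} the states $\tilde{\rho}_y$ are, whereas the target matrix, built from $N+1$ linearly independent rank-one effects $\proj{\phi_x}$ probed by a tomographically complete family $\{\proj{\psi_y}\}$, has rank $N+1$. Note also that your choice of ``arbitrary valid quantum operations'' for $x=1,\dots,N$ would not suffice even for the corrected argument: you need all $N+1$ effects of the target strategy to be linearly independent to certify rank $N+1$. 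Your preliminary observations (existence of a tomographically complete family of density matrices, and full affine dimension of $\{M:0\leq M\leq\mathbb{I}\}$) are correct and do supply the needed target strategy, but the contradiction itself must be rank-based rather than based on injectivity of a state-dependent map.
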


We refer to Appendix A of the \cite{SupMat} for the proof of the aforementioned theorem. Theorem \ref{theo1} shows that one can always detect the presence of a resource state corresponding to the resource theory $\{\mathcal{S}_d,\mathcal{O}_d\}$ in the PM scenario if $n(\mathcal{S}_{\mathrm{ext},d})<d^2$. Similarly, Theorem \ref{theo2} shows that one can always detect the presence of a resource operation corresponding to the resource theory $\{\mathcal{S}_d,\mathcal{O}_d\}$ in the PM scenario if $n(\mathcal{O}_{\mathrm{ext},d})<d^2$. One can straightaway infer from the above results that if one wants to detect the presence of a resource state as well as operation simultaneously, then one needs to consider the PM scenario with $y,x=0,1,\ldots,d^2-1$. Thus, we obtain the following corollary of Theorems \ref{theo1} and \ref{theo2}.
\setcounter{thm}{0}
\begin{cor}\label{cor1} Consider a resource theory $\{\mathcal{S}_d,\mathcal{O}_d\}$ such that the cardinality of the $\mathcal{S}_{\mathrm{ext},d}$ and $\mathcal{O}_{\mathrm{ext},d}$ is less than $d^2$. Consider now the PM scenario described above with $x,y=0,1,\ldots,d^2-1$ and any $j\geq2$ such that one obtains $\vec{p}=\{p(j|x,y)\}$. Then, there exists at least one probability distribution $\vec{p}_{Q}$ in quantum theory that can not be realized by either the free states in $\mathcal{S}_d$ or operations in $\mathcal{O}_d$. Thus, observing $\vec{p}_Q$ allows one to conclude that at least one of the preparations and operations is resourceful.
\end{cor}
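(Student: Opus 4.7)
The plan is to derive the corollary directly from Theorems \ref{theo1} and \ref{theo2} by embedding both of their PM scenarios inside the combined scenario with $x,y\in\{0,1,\ldots,d^2-1\}$. Under the hypotheses, both $N_1 := n(\mathcal{S}_{\mathrm{ext},d}) < d^2$ and $N_2 := n(\mathcal{O}_{\mathrm{ext},d}) < d^2$, so the sub-scenario $(y\leq N_1,\ x\leq d^2-1)$ used by Theorem \ref{theo1} and the sub-scenario $(y\leq d^2-1,\ x\leq N_2)$ used by Theorem \ref{theo2} both sit inside the combined scenario as a ``row slice'' and a ``column slice'' of the full table $\vec{p}=\{p(j|x,y)\}$.

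Next I would construct quantum preparations $\{\rho_y\}_{y=0}^{d^2-1}$ and operations $\{\mathbb{O}_x\}_{x=0}^{d^2-1}$ whose induced correlation $\vec{p}_Q$ simultaneously realises, on its row slice $y\leq N_1$, a Theorem \ref{theo1} witness, and on its column slice $x\leq N_2$, a Theorem \ref{theo2} witness. The conclusion then follows by a contrapositive: if $\vec{p}_Q$ were reproducible with every $\rho_y\in\mathcal{S}_d$ (and arbitrary operations), then its restriction to the row slice would also be so reproducible, contradicting Theorem \ref{theo1}; symmetrically it cannot be reproduced with every $\mathbb{O}_x\in\mathcal{O}_d$. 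Hence observation of $\vec{p}_Q$ excludes the joint hypothesis that all preparations and all operations are free, which is exactly the claim that at least one of the preparations or operations must be resourceful.

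The main obstacle is verifying the compatibility of the two slice-witnesses under a single joint choice of preparations and operations, since Theorems \ref{theo1} and \ref{theo2} supply their witnesses independently. I expect the cleanest route is to mirror the dimension-counting idea that presumably underpins those proofs: because $\rho_y\in\mathcal{S}_d$ forces each preparation to lie in a linear span of dimension at most $N_1$, the set of correlations realisable with free preparations in the combined scenario is the image of a bilinear map whose prep-component has rank bounded by $N_1 < d^2$; likewise free operations restrict the op-component to rank at most $N_2 < d^2$. Each of these restricted images is therefore a proper subvariety of the full quantum correlation set, whose parameter space has full rank $d^2$ on both sides. A generic quantum realisation with tomographically complete $\{\rho_y\}$ and $\{\mathbb{O}_x\}$ thus escapes both subvarieties, yielding the required $\vec{p}_Q$; the explicit resource-specific witnesses promised later in the paper then instantiate this existence argument.
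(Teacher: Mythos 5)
Your proposal is correct and follows essentially the same route as the paper, which states the corollary as an immediate consequence of Theorems \ref{theo1} and \ref{theo2}: one takes a single tomographically complete family of $d^2$ linearly independent pure states and $d^2$ linearly independent rank-one Kraus operators, whose full $d^2\times d^2$ probability table has rank $d^2$, while free preparations (resp.\ free operations) force the rank below $N<d^2$. The ``compatibility'' obstacle you flag dissolves exactly this way, since this one joint realization restricts to a valid instance of each theorem's construction on the corresponding slice.
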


Let us remark here that the number of inputs $x,y$ in the above corollary might not be optimal. As we will see below for particular resource theories, there might also exist probability distributions with a lower number of inputs $x,y$ such that at least one of the preparations and operations is resourceful. For this purpose, we construct witnesses which is a function of the probability distribution $\vec{p}$ of the form
\begin{eqnarray}
    W_{\mathcal{S}_d,\mathcal{O}_d}=f(\vec{p})\leq \beta_{\mathcal{S}_d,\mathcal{O}_d}
\end{eqnarray}
where $\beta_{\mathcal{S}_d,\mathcal{O}_d}$ is the maximum value attainable using the free states and operations in the resource theory $\{\mathcal{S}_d,\mathcal{O}_d\}$. Consequently, the presence of a resource can be verified if the above bound is violated. Furthermore, from a practical perspective it is usually beneficial to construct witnesses as one can never exactly obtain the theorised correlations in an experiment and witnesses provide a gap that can tolerate such errors.  

{\em Witnesses--} Let us begin by giving a generic construction of witnesses that follows from Corollary \ref{cor1}. For this purpose, we consider the probability distribution $\vec{p}_Q$ which can not be attained by free states and operations. As $x,y=0,1,\ldots,d^2-1$, the witness is given by
\begin{eqnarray}
    W_{\mathcal{S}_d,\mathcal{O}_d}=-\sum_{x,y=0}^{d^2-1}|p(0|x,y)-p_Q(0|x,y)|\leq \beta_{\mathcal{S}_d,\mathcal{O}_d}
\end{eqnarray}
where $\beta_{\mathcal{S}_d,\mathcal{O}_d}=-\varepsilon$ and $\varepsilon\rightarrow0^+$. The value of the above functional, when at least one state and operation is resourceful, is $0$. Consequently, the gap between the values obtainable using free states and operations and quantum theory is close to $0$ and thus highly susceptible to experimental errors. 

Let us now construct witnesses tailored to specific resource theories that can detect the presence of concerned quantum resources that provide a considerable gap between the maximum value obtainable using free states and operations to quantum theory. 
In the first example, we construct a witness for the resource theory of coherence \cite{Baumgratz_coherence, Coherence_review} in the PM scenario when the channel is limited to transmit qubits. In this theory free states and free operations are known as incoherent states and incoherent operations (IO), respectively. Here, we consider the PM scenario with $x=0,1$ and $y=0,1,2$. In this scenario, the witness to observe coherence can be constructed as 
\begin{eqnarray}\label{coherwit}
     W_C= p(0|0,0)+p(0|0,1)&+&p(0|1,0)+p(1|1,1)\nonumber\\&+&p(1|0,2)\leq \beta_{C},
\end{eqnarray}
where $\beta_{C}$ can be evaluated by considering incoherent states and operations \cite{Coherence_review}. The Kraus operator of any incoherent operation $\{K_j\}$ is given by $K_j=\sum_{i=0}^{d-1}c_{i,j}\ket{f(i)}\!\bra{i}$ where $f(i)$ is a function with domain and range being the labels of the incoherent basis vectors and thus the corresponding POVM is given by $K_j^{\dagger}K_j=\sum_{i=0}^{d-1}|c_{i,j}|^2\ket{i}\!\bra{i}$. 
Consequently, restricting to $d=2$ we find that $\beta_{C}=4$. Any violation of this bound would imply that at least one of the states and operations is coherent. Now, consider the following states,
\begin{eqnarray}\label{incostate}
    \rho_0=\proj{0},\qquad\rho_1=\proj{+},\qquad \rho_2=\proj{\overline{1}},
\end{eqnarray}
and measurements,
\begin{eqnarray}\label{incomea}
    K_{0,0}&=&\proj{\overline{0}},\qquad K_{1,0}=\proj{\overline{1}},\nonumber\\K_{0,1}&=&\proj{\overline{+}},\qquad K_{1,0}=\proj{\overline{-}},
\end{eqnarray}
where $\{\ket{\overline{0}},\ket{\overline{1}}\}$ are the eigenvectors of $\frac{\sigma_x+\sigma_z}{\sqrt{2}}$ and $\{\ket{\overline{+}},\ket{\overline{-}}\}$ are the eigenvectors of $\frac{\sigma_x-\sigma_z}{\sqrt{2}}$. Here, $\sigma_x$ and $\sigma_z$ denote respective Pauli matrices. Then, we obtain that $W_C=1+4\cos^2(\pi/8)=3+\sqrt{2}$, which violates the upper bound, i.e., $\beta_{C}=4$. Since the resource theories of asymmetry \cite{asymmetry} and athermality \cite{athermality_brandao, Lostaglio_athermality} are equivalent to the coherence theory \cite{Baumgratz_coherence, Coherence_review} for $d=2$, the same finding can be readily applied to both theories. Moreover, we identify that the above witness $W_C$ in Eq. \eqref{coherwit} can be generalised to the case where the channel can transmit qudits. For this purpose, we consider the PM scenario in Fig. \ref{fig1} such that the number of inputs in the preparation box is $d^2$ and the operation box is two and is given as $y\equiv y_0y_1$ where $y_0,y_1=0,1,\ldots,d-1$ and $x=0,1$. Based on the quantum random access codes (QRAC) for arbitrary dimensions introduced in \cite{RAC2}, we represent the witness of coherence in arbitrary dimensions as
\begin{eqnarray}
    W_{C,d}=\sum_{x=0,1}\sum_{y_0,y_1=0}^{d-1}p(j=y_x|x,y_0y_1)\leq\beta_{C,d}.
\end{eqnarray}
where $\beta_{C,d}$ is the maximum value of $W_{C,d}$ attainable using incoherent states and operations. In the supplementary materials, we show that $\beta_{C,d}=d^2+d$. Consider now the following observables, $Z_d=\sum_{i=0}^{d-1}\omega^{i}\proj{i},\ X_d=\sum_{i=0}^{d-1}\ket{i}\!\bra{i+1}$ which are $d-$dimensional generalisation of the Pauli $Z,X$ observables respectively with $\omega=e^{2\pi i/d}$. Then, putting the states $\ket{\psi_{y_0y_1}}=X_d^{y_0}Z_d^{y_1}\ket{\psi_{00}}$ where $\ket{\psi_{00}}=\frac{1}{\sqrt{2\sqrt{d}(1+\sqrt{d})}}\Big((\sqrt{d}+1)\ket{0}+\sum_{i=1}^{d-1}\ket{i}\Big)$ along with the operations $Z_d,X_d$ corresponding to input $x=0,1$, gives us $W_{C,d}=d^2+d\sqrt{d}$. This further shows us that the resource that gives the quantum advantage in QRAC is quantum coherence.

In the next example, we consider resource theory of imaginarity \cite{imaginarity_PRL, imaginarity_PRA} and construct a witness for the detection of imaginarity. Again, we take the PM scenario with $x=0,1,2$ and $y=0,1,2,3$ and $d=2$, and the witness to observe imaginarity can be designed as
\begin{eqnarray}\label{witness_imaginarity}
    W_I=W_C+p(0|2,3)-|p(0|2,0)-p(1|2,0)|\nonumber\\-|p(0|2,1)-p(1|2,1)|\leq \beta_{I}.
\end{eqnarray}
where $\beta_{I}$ is the maximal value attainable considering real states and operations \cite{imaginarity_PRL}. Consequently, to evaluate $\beta_I$, we recall that any state $\rho\in \mathbb{C}^2$ can be expressed as $\rho=(\I+r_x\sigma_x+r_y\sigma_y+r_z\sigma_z)/2$, where $\sigma_i$ are Pauli matrices and $r_i$'s are components of the Bloch vector $\vec{r}$. Since in the resource theory of imaginarity free states satisfy $\rho=\rho^*$, one can readily obtain that any free state can be expressed as $\rho=1/2(\I+r_z\sigma_z+r_x\sigma_x)$. Similarly, for the POVM elements, we have $K_j^{\dagger}K_j=m_0\I+m_z\sigma_z+m_x\sigma_x$. Using these representations we evaluate that $\beta_{I}\leq 5$. Any violation of this bound would imply that at least one of the states and measurements must be complex. Now, consider the following states 
\begin{eqnarray}
    \rho_0&=&\proj{0},\qquad\rho_1=\proj{+},\nonumber\\ \rho_2&=&\proj{\overline{0}},\qquad\rho_3=\proj{+_y},
\end{eqnarray}
and measurements
\begin{eqnarray}
    K_{0,0}&=&\proj{\overline{0}},\quad K_{1,0}=\proj{\overline{1}},\quad  K_{0,1}=\proj{\overline{+}},\nonumber\\ K_{1,0}&=&\proj{\overline{-}},\quad  K_{0,2}=\proj{+_y},\quad K_{1,2}=\proj{-_y},\ \ \ \ 
\end{eqnarray}
where $\ket{\pm_y}=1/\sqrt{2}(\ket{0}\pm i\ket{1})$.
Using the above states and operations one can obtain $W_I=4+\sqrt{2}$, which violates the upper bound given in Eq. (\ref{witness_imaginarity}). 
In Appendix B of \cite{SupMat}, we also find specific witnesses for resource theory of purity and magic.

{\em Discussions--}
In the above work, we considered the prepare-and-measure scenario where the measurement box can perform generalised quantum instruments. By limiting the quantum channel to transmit $d$-dimensional states, we showed that for any resource theory, be it convex or non-convex, with fewer than $d^2$ linearly independent free states or operations, it is always possible to detect if the preparation box produces at least one resource state and if the operation box performs at least one resourceful operation. 
While our focus in this letter has been on single-partite systems, it is worth noting that our method can be readily extended to multipartite systems, thus encompassing the resource theory of entanglement.

In subchannel discrimination task \cite{CD1}, it was shown that for every convex resource theory, the advantage obtained is directly related to the robustness measure of the state, which followed from the fact that all the quantum resources share the same linear witness. In our case, such a general relation can not be obtained due to the simple fact that one might not always obtain a linear witness to observe the advantage in the PM scenario [For instance, see the witness of imaginarity in Eq. \eqref{witness_imaginarity}]. Moreover, to detect resources of unknown states or operations in the PM scenario, one would require more than one state, operations and witnesses to detect resources which in general will be different for different resource theories. Furthermore, the optimisation again has to be carried over the states (unlike subchannel discrimation), so the advantage can not be related to the robustness measure of some states. However, it remains an interesting open problem whether the advantage obtained is related to some operational quantity. This would further address the problem of finding sufficient conditions to violate a witness in the PM scenario.

A practical application of our results can be illustrated in the following scenario: Imagine an experimental lab intending to purchase a preparation box guaranteed to produce at least one resource state, and an operation box that performs at least one resourceful operation. Suppose the experimenter has access to a quantum channel capable of supporting at max qudit systems. In that case, they can verify the capabilities of these boxes without needing to trust them or have any prior knowledge about their workings.

Our work opens up various interesting questions in the PM scenario. As previously mentioned in Fig. \ref{fig1}, our objective was to confirm the presence of resources within the two boxes. Now, consider a situation in which the boxes are intended to generate resource states and operations only. It would be interesting to verify if the PM scenario can confirm whether the boxes are functioning as intended, i.e., they are not producing any free state or operation. The following is another intriguing follow-up question to consider: Consider now a theory in dimension $d$ where free states are given by the convex combination of $d^2$ linearly independent vectors. An instance of such linearly independent vectors is provided in \cite{D_Ariano_2005}. Let us for the time being assume that it is a valid resource theory. 
Even if such a set of free states would not include every quantum state in dimension $d$, it is not clear whether such a resource would be detected using the above results in the PM scenario.

\begin{acknowledgements}
We thank Alexander Streltsov for his useful insights. S.S. acknowledges the QuantERA II Programme (VERIqTAS project) that has received funding from the European Union’s Horizon 2020 research and innovation programme under Grant Agreement No 101017733. C.D. acknowledges support from the German Federal Ministry of Education and Research (BMBF) within the funding program ``quantum technologies -- from basic research to market'' in the joint project QSolid (grant number 13N16163).
    
\end{acknowledgements}

%

\onecolumngrid
\appendix
\section{Proofs}
\setcounter{thm}{0}
\begin{thm}\label{theo1}Consider a resource theory $\{\mathcal{S}_d,\mathcal{O}_d\}$ such that the cardinality of the $\mathcal{S}_{\mathrm{ext},d}$ is $N$ with $N<d^2$. Consider now the PM scenario described above with $y=0,1,\ldots,N$ and $x=0,1,\ldots,d^2-1$ and any $j\geq2$ such that one obtains $\vec{p}=\{p(j|x,y)\}$. Then, there exists at least one probability distribution $\vec{p}_{Q}$ in quantum theory that can not be realized by free states in $\mathcal{S}_d$. 
\end{thm}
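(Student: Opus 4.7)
The plan is a rank-counting argument in the $d^2$-dimensional real vector space of Hermitian operators on $\mathbb{C}^d$, equipped with the Hilbert--Schmidt inner product $\langle A,B\rangle_{\mathrm{HS}}=\Tr(A^{\dagger}B)$. Since $N<d^2$, we have $N+1\leq d^2$, so one may select $N+1$ density matrices $\rho_0,\rho_1,\ldots,\rho_N$ on $\mathbb{C}^d$ whose vectorizations are linearly independent; this is possible because rank-one projectors span the entire Hermitian operator space, so any $N+1$ sufficiently generic pure states will do. Independently, we choose the observer's $d^2$ operations so that the outcome-$0$ POVM elements $\{M_{0,x}\}_{x=0}^{d^2-1}$ form a basis of the Hermitian operator space. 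A convenient construction is to take a basis of rank-one projectors and rescale each by a sufficiently small positive constant; one then completes each measurement to a valid two-outcome POVM via $M_{1,x}=\I-M_{0,x}$. Rescaling preserves linear independence, so the basis property is retained, and the setting with $j\geq 2$ outcomes is respected.

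Next, form the quantum distribution $p_Q(0|x,y)=\Tr(M_{0,x}\rho_y)$ and arrange it as a matrix $P\in\mathbb{R}^{d^2\times(N+1)}$. Writing $P=AR$, where $A$ is the $d^2\times d^2$ matrix whose rows are the vectorized $M_{0,x}$ and $R$ is the $d^2\times(N+1)$ matrix whose columns are the vectorized $\rho_y$, the invertibility of $A$ (its rows form a basis) together with linear independence of the $\rho_y$ yields $\mathrm{rank}(P)=\mathrm{rank}(R)=N+1$.

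Now assume for contradiction that $\vec{p}_Q$ can be realized by free preparations $\sigma_y\in\mathcal{S}_d$ and some quantum operations with POVM elements $\tilde M_{j,x}$, so that $p_Q(0|x,y)=\Tr(\tilde M_{0,x}\sigma_y)$ for all admissible $x,y$. This gives the factorization $P=\tilde A S$, where $S$ is the $d^2\times(N+1)$ matrix of vectorized $\sigma_y$. By hypothesis every $\sigma_y$ is a real linear combination of the $N$ basis elements $\proj{\Gamma_{d,i}}\in\mathcal{S}_{\mathrm{ext},d}$, so each column of $S$ lies in the same $N$-dimensional subspace of $\mathbb{R}^{d^2}$; hence $\mathrm{rank}(S)\leq N$, and therefore $\mathrm{rank}(P)\leq \mathrm{rank}(S)\leq N$. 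This contradicts the previously established $\mathrm{rank}(P)=N+1$, so no free-state realization of $\vec{p}_Q$ exists, and $\vec{p}_Q$ is the sought probability distribution.

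The main obstacle is the constructive step: one must simultaneously exhibit $d^2$ valid POVM effects (respecting positivity and $M_{0,x}\leq \I$) whose vectorizations form a basis, and $N+1$ bona fide density matrices that are linearly independent in the Hermitian operator space. Once these are in hand, the rank inequality $\mathrm{rank}(P)\leq \mathrm{rank}(S)$ for any free-state factorization closes the argument with no further computation. A symmetric argument, swapping the roles of preparations and operations, will furnish Theorem~\ref{theo2}.
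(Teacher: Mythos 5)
Your proposal is correct and follows essentially the same route as the paper: both construct $\vec{p}_Q$ from $N+1$ linearly independent pure states and a tomographically complete family of rank-one effects, and both derive a contradiction by showing the resulting $(N+1)\times d^2$ probability matrix has rank $N+1$ in the quantum realization but rank at most $N$ under any free-state realization. Your factorization $P=AR$ with $A$ invertible is a somewhat cleaner packaging of the paper's explicit row-independence computation, but the underlying rank-counting argument is identical.
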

\begin{proof} The proof is by contradiction. Let us first identify $\vec{p}_Q$ by considering that for each $y$ the preparation box produces the corresponding pure states $\ket{\psi_y}\in \mathbb{C}^d$ such that $\{\proj{\psi_y}\}_y$ are linearly independent. Similarly, the operations are chosen such that for each input $x$ the corresponding Kraus operators are $K_{0,x}=\proj{\phi_x}$ such that $\{\proj{\phi_x}\}_x$ are again linearly independent. As $x=0,1,\ldots,d^2-1$, the set $\{\proj{\phi_x}\}$ forms a basis for any operator acting on $\mathbb{C}^d$. Now, any probability $p_Q(0|x,y)$ realised by these states and operators are given by $ p_Q(0|x,y)=|\bra{\psi_y}\phi_x\rangle|^2$. Moreover, as the set $\{\proj{\phi_x}\}$ is a basis, $ p_Q(0|x,y)\ne0$ for at least one $x$ for every $y$.  From here on, we refer to $\vec{p}_Q=\{p_Q(0|x,y)\}$ for the rest of the proof.
    
Let us now assume that the vector $\vec{p}_{Q}$ can be realized using free states in $\mathcal{S}_d$. As described earlier, any free state in $\mathcal{S}_d$ can be written as a linear combination of $\proj{\Gamma_{d,i}}\in\mathcal{S}_{\mathrm{ext},d}$ and thus we obtain that
    \begin{equation}\label{2}
   \sum_{i=1}^N \lambda_{i,y} \bra{\Gamma_{d,i}}\tilde{K}_{0,x}^{\dagger}\tilde{K}_{0,x}\ket{\Gamma_{d,i}}=p_Q(0|x,y)\quad \forall x,y
    \end{equation}
     where we do not put any restrictions on the Kraus operators $\tilde{K}_{0,x}$ and $\lambda_{i,y}$ are some scalars such that $\sum_{i=1}^{N}\lambda_{i,y}\proj{\Gamma_{d,i}}$ are valid quantum states for all $y$.  

    We now show that the above relations given in Eq. \eqref{2} can not hold if $n(\mathcal{S}_{\mathrm{ext},d})=N<d^2$. For this purpose, we observe that the above relations in Eq. \eqref{2} can be expressed as a matrix equation of the form $M_F=M_Q$, where  $M_F$ and $M_Q$ are matrices with $N+1$ rows and $d^2$ columns such that
    \begin{eqnarray}
        M_F&=&\sum_{y=0}^{N}\sum_{x=0}^{d^2-1}\left(\sum_{i=1}^N \lambda_{i,y} \bra{\Gamma_{d,i}}\tilde{K}_{0,x}^{\dagger}\tilde{K}_{0,x}\ket{\Gamma_{d,i}}\right)\ket{y}\!\bra{x},\nonumber\\
        M_Q&=&\sum_{y=0}^{N}\sum_{x=0}^{d^2-1}p_Q(0|x,y)\ket{y}\!\bra{x}.
    \end{eqnarray}
    Let us evaluate the rank of these matrices. We first consider $M_Q$ and show that its rank is $N+1$ and thus all the rows must be linearly independent. To see this, we take a linear combination of all rows to obtain the following condition
    \begin{eqnarray}\label{4}
        \sum_{y}\alpha_y\Tr(\proj{\psi_y}\ \proj{\phi_{x}})=0\quad\forall x
    \end{eqnarray}
    where $\alpha_y$ are some scalars and then we show that $\alpha_y=0$ for all $y$.
    Consequently, we take a linear combination of the above relations \eqref{4} that should satisfy for any $\{\beta_x\}$ the following condition
    \begin{eqnarray}
           \Tr\left(\sum_y\alpha_y\proj{\psi_y}\sum_x \beta_x\proj{\phi_{x}}\right)=0.
    \end{eqnarray}
    As $\{\proj{\phi_{x}}\}$ spans the space of all density matrices of dimension $d^2$, we obtain that
    \begin{eqnarray}
        \Tr\left(\sum_y\alpha_y\proj{\psi_y}\ \rho\right)=0\qquad \forall \rho\in\mathbb{C}^d.
    \end{eqnarray}
    Thus, 
    the only solution of the above formula is
    \begin{eqnarray}
        \sum_y\alpha_y\proj{\psi_y}=0
    \end{eqnarray}
   which using the fact that $\proj{\psi_y}$ are linearly independent gives us $\alpha_y=0$ for all $y$. Thus, all the rows of $M_Q$ are linearly independent, and $\mathrm{rank}(M_Q)=N+1$.

    Now, let us focus on $M_F$ and show that all its rows are not linearly independent. As done above, we take a linear combination of all rows of $M_F$ to obtain the following condition
    \begin{eqnarray}
        \sum_{i,y}\alpha_y\lambda_{i,y}\Tr(\proj{\Gamma_{d,i}}\tilde{K}_{0,x}^{\dagger}\tilde{K}_{0,x})=0\quad \forall x
    \end{eqnarray}
    which can be expressed as
    \begin{eqnarray}
    \sum_{i=1}^N\left(\sum_{y=0}^{N}\alpha_y\lambda_{i,y}\right)\Tr(\proj{\Gamma_{d,i}}\tilde{K}_{0,x}^{\dagger}\tilde{K}_{0,x})=0\quad \forall x.
    \end{eqnarray}
    If all the rows of $M_F$ are linearly independent, then the only solution of the above formula should be $\alpha_y=0$ for all $y$. However, the following condition satisfies the above formula 
    \begin{eqnarray}
        \sum_{y=0}^{N}\alpha_y\lambda_{i,y}=0\quad \forall i.
    \end{eqnarray}
    Given any fixed set of $\lambda_{i,y}$, there are $N$ equations with $N+1$ variables $\alpha_y$. Thus, one has infinite solutions of $\alpha_y$. Thus, the rows of $M_F$ are not linearly independent which allows us to conclude that $\mathrm{rank}(M_F)\leq N$. Consequently, our initial assumption $M_F=M_Q$ is incorrect as both matrices are of different ranks. This completes the proof.
\end{proof}

\begin{thm}Consider a resource theory $\{\mathcal{S}_d,\mathcal{O}_d\}$ such that the cardinality of the $\mathcal{O}_{\mathrm{ext},d}$ is $N$ such that $N<d^2$. Consider now the PM scenario described above with $y=0,1,\ldots,d^2-1$ and $x=0,1,\ldots,N$ and any $j\geq2$ such that one obtains $\vec{p}=\{p(j|x,y)\}$. Then, there exists at least one probability distribution $\vec{p}_{Q}$ in quantum theory that can not be realized by free operations in $\mathcal{O}_d$.
\end{thm}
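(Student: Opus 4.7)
The plan is to mirror the proof of Theorem \ref{theo1} with the roles of states and operations interchanged, so that linearly independent states now play the role that linearly independent operations played there. First I would construct a candidate distribution $\vec{p}_Q$ by taking $d^2$ pure preparations $\ket{\psi_y}\in\mathbb{C}^d$ with $y=0,1,\ldots,d^2-1$ chosen so that $\{\proj{\psi_y}\}_y$ forms a basis for the operator algebra on $\mathbb{C}^d$, together with $N+1$ rank-one operations specified by Kraus operators $K_{0,x}=\proj{\phi_x}$ for $x=0,1,\ldots,N$ with $\{\proj{\phi_x}\}_x$ linearly independent. This yields $p_Q(0|x,y)=|\braket{\psi_y}{\phi_x}|^2$, and one checks that $p_Q(0|x,y)\neq 0$ for at least one $y$ for every $x$.

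Next I would assume for contradiction that $\vec{p}_Q$ can be realised in the prescribed PM scenario using some (unrestricted) preparations $\rho_y$ together with free operations in $\mathcal{O}_d$. Since every such free POVM element admits a decomposition $\tilde{K}_{0,x}^{\dagger}\tilde{K}_{0,x}=\sum_{i=1}^{N}\mu_{i,x}\proj{\Delta_{d,i}}$ in the basis $\mathcal{O}_{\mathrm{ext},d}$, the realisation condition becomes
\begin{equation}
\sum_{i=1}^{N}\mu_{i,x}\,\bra{\Delta_{d,i}}\rho_y\ket{\Delta_{d,i}}=p_Q(0|x,y)\qquad \forall\, x,y.
\end{equation}
I would encode these relations as a matrix identity $M_F=M_Q$ of shape $d^2\times(N+1)$, with $M_Q=\sum_{y,x}p_Q(0|x,y)\ket{y}\!\bra{x}$ and $M_F=\sum_{y,x}\bigl(\sum_i \mu_{i,x}\,\bra{\Delta_{d,i}}\rho_y\ket{\Delta_{d,i}}\bigr)\ket{y}\!\bra{x}$, and then compare ranks.

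To bound the rank of $M_F$ I would read off the factorisation $M_F=AB$, where $A[y,i]=\bra{\Delta_{d,i}}\rho_y\ket{\Delta_{d,i}}$ has size $d^2\times N$ and $B[i,x]=\mu_{i,x}$ has size $N\times (N+1)$, forcing $\mathrm{rank}(M_F)\leq N$. For $M_Q$ I would prove full column rank $N+1$ by working on the columns rather than the rows: any vanishing linear combination $\sum_x\beta_x\,\mathrm{Tr}(\proj{\psi_y}\proj{\phi_x})=0$ for every $y$ implies, after summing over $y$ with arbitrary weights and using that $\{\proj{\psi_y}\}$ spans the full operator space on $\mathbb{C}^d$, that $\sum_x\beta_x\proj{\phi_x}=0$; linear independence of $\{\proj{\phi_x}\}$ then forces all $\beta_x=0$, so $\mathrm{rank}(M_Q)=N+1$. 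The rank mismatch contradicts $M_F=M_Q$.

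The main obstacle, such as it is, is locating the $N$-dimensional bottleneck correctly: one must recognise that the hypothesis $n(\mathcal{O}_{\mathrm{ext},d})=N<d^2$ enters precisely as the inner dimension of the factorisation $M_F=AB$, and dually that to endow $M_Q$ with full column rank $N+1$ one must choose the preparations $\{\proj{\psi_y}\}$ as a basis of the full operator algebra on $\mathbb{C}^d$, which is possible since $n_{\max}(\mathcal{S}_{\mathrm{ext},d})=d^2$. Beyond this, the argument is dual book-keeping inherited from the proof of Theorem \ref{theo1}.
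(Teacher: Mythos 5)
Your proposal is correct and follows essentially the same route as the paper, which itself proves this theorem by dualising the argument of Theorem 1: build $\vec{p}_Q$ from a spanning set of pure preparations and $N+1$ linearly independent rank-one Kraus operators, then derive a contradiction from $\mathrm{rank}(M_Q)=N+1$ versus $\mathrm{rank}(M_F)\leq N$. Your explicit factorisation $M_F=AB$ with inner dimension $N$ is just a cleaner packaging of the paper's observation that the homogeneous system $\sum_{y}\alpha_y\lambda_{i,y}=0$ ($N$ equations, more unknowns) always has a nontrivial solution, so no new idea is involved.
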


\begin{proof} 
Again, the proof is by contradiction. Let us first identify $\vec{p}_Q$ by considering that for each $y$ the preparation box produces the corresponding pure states $\ket{\psi_y}\in \mathbb{C}^d$ such that $\{\proj{\psi_y}\}_y$ are linearly independent. As $y=0,1,\ldots,d^2-1$, the set $\{\proj{\psi_y}\}$ forms a basis for any operator acting on $\mathbb{C}^d$. Similarly, the operations are chosen such that for each input $x$ the corresponding Kraus operators are $K_{0,x}=\proj{\phi_x}$ such that $\{\proj{\phi_x}\}_x$ are again linearly independent. Now, any probability $p_Q(0|x,y)$ realised by these states and operators are given by $ p_Q(0|x,y)=|\bra{\psi_y}\phi_x\rangle|^2$. Moreover, as the set $\{\proj{\psi_y}\}$ is a basis, $ p_Q(0|x,y)\ne0$ for at least one $y$ for every $x$. From here on, we refer to $\vec{p}_Q=\{p_Q(0|x,y)\}$ for the rest of the proof.
    
Let us now assume that the vector $\vec{p}_{Q}$ can be realized using free operations in $\mathcal{O}$. As described earlier, any free operation in $\mathcal{O}_d$ can be written as a linear combination of $\proj{\Delta_{d,i}}\in\mathcal{O}_{\mathrm{ext},d}$ and thus we obtain that
    \begin{equation}\label{21}
   \sum_{i=1}^N \lambda_{i,x} \bra{\Delta_{d,i}}\tilde{\rho}_y\ket{\Delta_{d,i}}=p_Q(0|x,y)\quad \forall y
    \end{equation}
    where we do not put any restrictions on the state $\tilde{\rho}_{y}$ and $\lambda_{i,x}$ are some scalars such that $\sum_{i=1}^{N}\lambda_{i,x}\proj{\Delta_{d,i}}$ are valid quantum operators for all $y$..

    Let us now show that the above relations in Eq. \eqref{21} cannot hold if $n(\mathcal{O}_{\mathrm{ext}})=N<d^2$. We again observe that the above relations in Eq. \eqref{21} can be expressed as a matrix equation of the form $M_F'=M_Q'$, where  $M_F',M_Q'$ are matrices with $N+1$ rows and $d^2$ columns such that
    \begin{eqnarray}
       M_F'&=&\sum_{y=0}^{d^2-1}\sum_{x=0}^{N}\left(\sum_{i=1}^N \lambda_{i,y} \bra{\Delta_{d,i}}\tilde{\rho}_y\ket{\Delta_{d,i}}\right)\ket{x}\!\bra{y},\nonumber\\
        M_Q'&=&\sum_{y=0}^{d^2-1}\sum_{x=0}^{N}p_Q(0|x,y)\ket{x}\!\bra{y}.
    \end{eqnarray}
    
    As done in Theorem \ref{theo1}, one can straightaway show that $M_F'\ne M_Q'$ by showing that $\mathrm{rank}(M_Q')=N+1$ but $\mathrm{rank}(M_F')\leq N$ and thus proving that the relations in Eq. \eqref{21} can not hold. This completes the proof.
\end{proof}

\section{Witnesses for some more resource theories}
\subsection{Coherence for arbitrary dimension}
The witness to detect coherent states and operations is given by
\begin{eqnarray}
    W_{C,d}=\sum_{x=0,1}\sum_{y_0,y_1=0}^{d-1}p(j=y_x|x,y_0y_1)\leq \beta_{C,d}.
\end{eqnarray}
Let us now evaluate $\beta_{C,d}$. For this purpose, we express the above witness as
\begin{eqnarray}\label{13sup}
    W_{C,d}=\sum_{y_0,y_1=0}^{d-1}\Tr(\rho_{y_0y_1}(A_{y_0}+B_{y_1}))
\end{eqnarray}
where $\{A_{y_0}\},\{B_{y_1}\}$ are the measurement operators corresponding to input $x=0,1$ respectively and consequently we have that $\sum_{y_0}A_{y_0}=\sum_{y_1}B_{y_1}=\I_d$. Note that for $d-$ (or less) outcome measurements, this condition imposes that they must be rank-one. Now the expression \eqref{13sup} can be upper bounded as
\begin{eqnarray}\label{14sup}
     W_{C,d}\leq\sum_{y_0,y_1=0}^{d-1}||A_{y_0}+B_{y_1}||
\end{eqnarray}
where $||.||$ denotes the operator norm. Let us now notice that the function on the right-hand side of the above equation is convex over the measurements $\{A_{y_0}\},\{B_{y_1}\}$. Moreover, the incoherent measurements are rank-one and thus belong to a convex set with the extremal measurements: $\{\proj{i}\}$ and all its relabelling. From Bauer's maximum principle, it is known that for any convex function defined on a set that is convex and compact, attains its maximum at some extreme point of that set. Consequently, we can now maximize \eqref{14sup} by considering only the extremal measurements. A simple assignment of the extremal measurements to \eqref{14sup} gives us that $\beta_{C,d}=d^2+d$ which can be realised by choosing $A_{i}=B_{i}=\proj{i}$.

\subsection{Purity}
Here, we construct a witness for the resource theory of purity for any $d$ \cite{purity_horodecki,purity_gour,purity_streltsov}. In this theory, the free state is $\I/d$ and free operations are called unital operations $\Lambda_U$ which preserve the maximally mixed state, i.e., $\Lambda_U (\I/d)=\I/d$. We consider the PM scenario with $x,y=0$, and the witness to observe purity is simply given by 
\begin{eqnarray}
    W_P=p(0|0,0)\leq \beta_{P}.
\end{eqnarray}
It is straightforward to observe that the above witness is bounded by $1/d$ for $\rho_0=\I/d$ and any quantum operation as
\begin{eqnarray}
    \beta_{F,P}=p(0|0,0)=\Tr(K_{0,0}^\dagger K_{0,0} \rho_0)\leq\frac{1}{d}.
\end{eqnarray}
It is clear that one can obtain $W_p=1$ with $\ket{\psi_0}=\ket{0}$ and $ K_{0,0}=\proj{0}$. However, using the above witness one can not distinguish unital to non-unital operations. We leave that problem open for future explorations.

\subsection{Magic}

In the resource theory of magic, free states and free operations are stabiliser states and Clifford unitaries, respectively \cite{Veitch_2014,Howard_2017}. Interestingly for $d=2$, the witness $W_C$ in Eq. (4) of the main text also serves as a witness of magic resource given by
\begin{eqnarray}
     W_C= p(0|0,0)+p(0|0,1)&+&p(0|1,0)+p(1|1,1)+p(1|0,2)\leq \beta_{M}.
\end{eqnarray}
However, the maximal value that can be achieved by non-magic states $\beta_{M}$ is higher than $\beta_{C}$. Numerically evaluating $\beta_{M}$ gives us $\beta_{M}=4.32$. Here, the optimisation is carried over free states and any quantum measurement. Consequently, any value of $W_C$ beyond $\beta_M$ detects that the preparation box prepares at least one magic state. Consider now the states and Kraus operators given in Eqs. (5) and (6) of the main text, respectively, and as described in the main text one can obtain $W_C=3+\sqrt{2}$ for those states and measurements.


\end{document}